\newcommand{\pd}{\mathbf{P}}
\newcommand{\eps}{\varepsilon}
\newtheorem{theorem}{Theorem}[section]
\newtheorem{definition}[theorem]{Definition}
\newtheorem{lemma}[theorem]{Lemma}
\def\eps{{\epsilon}}
\newcommand{\ignore}[1]{}
\newcommand{\eat}[1]{}
\newcommand{\squishlist}{
 \begin{list}{$\bullet$}
  { \setlength{\itemsep}{0pt}
     \setlength{\parsep}{3pt}
     \setlength{\topsep}{3pt}
     \setlength{\partopsep}{0pt}
     \setlength{\leftmargin}{1.5em}
     \setlength{\labelwidth}{1em}
     \setlength{\labelsep}{0.5em} } }
\newcommand{\squishend}{
  \end{list}  }
\def\eps{{\epsilon}}
\begin{document}

\title{Distributed Computation of Mixing Time}


\author{Anisur Rahaman Molla\thanks{Research partially supported by ERC Grant No. 336495 (ACDC).} \\
	\small{Department of Computer Science}\\
	\small{University of Freiburg}\\
	\small{79110 Freiburg, Germany}\\
	\small{\texttt{anisurpm@gmail.com}}
\and 
Gopal Pandurangan \thanks{Supported, in part, by NSF grants  CCF-1527867,  CCF-1540512, and IIS-1633720.}\\
	\small{Department of Computer Science}\\
	\small{University of Houston}\\
	\small{Houston, Texas 77204, USA}\\
	 \small{\texttt{gopalpandurangan@gmail.com}}
}

\date{}
\maketitle

\begin{abstract}  The mixing time of a graph is an important metric, which is not only useful in analyzing connectivity and expansion properties of the network, but also serves as a key parameter in designing efficient algorithms. 
 We present an efficient distributed algorithm for computing the mixing time of  undirected graphs. Our algorithm  estimates the mixing time $\tau_s$ (with respect to a source node $s$) of any $n$-node undirected graph in $O(\tau_s \log n)$ rounds. 
Our algorithm is based on random walks and require very little memory and use lightweight local computations, and work in the {\em CONGEST} model. Hence our algorithm is scalable under bandwidth constraints and can be an helpful building block in the  design of topologically aware networks.    
\end{abstract}


\hspace{2.5in}

\noindent {\bf Keywords:} distributed algorithm, random walk, mixing time,  conductance, spectral properties

\section{Introduction}

Mixing time of a random walk in a graph is the time taken  by a  random walk to converge to the {\em stationary distribution} of the underlying graph.  It is an important parameter which is closely related to various key graph properties such as graph expansion, spectral gap, conductance etc. Mixing time is related to the {\em conductance} $\Phi$ and {\em spectral gap} ($1-\lambda_2$) of a $n$-node graph due to the known relations (\cite{JS89}) that $\frac{1}{1-\lambda_2}\leq \tau \leq \frac{\log n}{1-\lambda_2}$ and $\Theta(1-\lambda_2)\leq \Phi \leq \Theta(\sqrt{1-\lambda_2})$, where  $\lambda_2$ is the second largest eigenvalue of the adjacency matrix of the graph.   Small mixing time means the graph has  high expansion and spectral gap. Such a network supports fast random sampling (which
has many applications \cite{drw-jacm}) and low-congestion routing \cite{mihail}. Moreover, the spectral properties reveal a lot about the network structure \cite{DasSarmaMPU15}. Mixing time is also useful in designing efficient randomized algorithms in communication networks \cite{storage-spaa13,APR-podc13,SarmaMP15,DasSarmaMPU15,KM15,sirocco14}. 

There has been some previous work on distributed algorithms to compute mixing time.
The work of Kempe and McSherry \cite{kempe}  estimates the mixing time $\tau$ in $O(\tau \log^2 n)$ rounds. This approach uses {\em Orthogonal Iteration} i.e., heavy matrix-vector multiplication process, where each node needs to perform complex calculations and do memory-intensive  computations. This may not be suitable in
a lightweight environment.
 It is mentioned in their paper that it would be interesting whether a simpler and direct approach based on eigenvalues/eigenvectors can be used to compute mixing time.  Das Sarma et al. \cite{drw-jacm} presented a distributed algorithm based on sampling nodes by performing sub-linear time random walks and then comparing the distribution with stationary distribution. This algorithm can be sometimes faster than our approach, however, there is a grey area (in the comparison between the two distributions) for which their algorithm fails to estimate the mixing time with any good accuracy (captured
 by the accuracy parameter $\eps$ defined in
 Section \ref{sec:rwalk}). Our algorithm  is sometimes faster (when the mixing time is $o(\sqrt{n})$) and estimates the mixing time  with high accuracy (cf. Section \ref{sec:related}).

In this paper, we focus on developing a simple and efficient distributed algorithm for computing mixing time in graphs. 
Given an undirected $n$-node network $G$ and a source node $s$, our algorithm estimates the mixing time $\tau_s$ for the source node in $O(\tau_s \log n)$ rounds and achieves high accuracy of estimation. We note that this running time is non-trivial in the CONGEST model.\footnote{In the LOCAL model, all problems can be trivially solved in $O(D)$ rounds.}

 Our algorithm works in CONGEST model of distributed computation where only small-sized messages ($O(\log n)$-bits messages) are allowed in every communication round between adjacent nodes. Moreover, our algorithm is simple, lightweight 
 (low-cost  computations within a node) and easy to implement.  

Our approach crucially uses random walks. Random walks
are very  local and lightweight and require little index or state maintenance 
that makes it attractive to self-organizing networks \cite{BBSB04,ZS06}. Our approach, on a high-level, is based on efficiently performing many random walks from a particular node and computing the fraction of random walks that terminate over each node. We show that this fraction estimates the random walk probability distribution. Our approach achieves very high accuracy which is a requirement in some applications \cite{DasSarmaGP09,SarmaMP15,KM15,SpielmanT04}.

We remark that we can compute the mixing time of a graph in $O(m)$ time in the  CONGEST model, where $m$
is the number of edges in the graph. In the worst case, the mixing time $\tau$ could be $O(n^3)$ for some graph, e.g., the Lollipop graph. On the other hand, the complete graph topology can be collected to a single node (e.g., by electing a leader which takes $O(D)$ rounds \cite{KuttenPPRT15}) by flooding in $O(m)$ rounds and then the node  can compute the mixing time locally. The source node can compute the number of edges in the beginning (which can be done in $O(D)$ time) and then run the two algorithms in parallel and stop when one of them terminates. 


\subsection{Distributed Computing Model}
\label{sec:distmodel}
We model the communication network as an undirected, unweighted, connected graph $G = (V, E)$, where $|V| = n$ and $|E| = m$. Every  node has limited initial knowledge. Specifically, assume that each node is associated with a distinct identity number  (e.g., its IP address). 
At the beginning of the computation, each node $v$ accepts as input its own identity number and the identity numbers of its neighbors in $G$.
We also assume that the number of nodes and edges i.e., $n$ and $m$ (respectively) are given as inputs. (In any case, nodes can compute them easily through broadcast in $O(D)$ time; this does not affect  the asymptotic bounds of our algorithm.) The nodes are only allowed to communicate through the edges of the graph $G$. We assume that the communication occurs in  synchronous  {\em rounds}. 
We will use only small-sized messages. In particular, in each round, each node $v$ is allowed to send a message of size $O(\log n)$ bits through each edge $e = (v, u)$ that is adjacent to $v$.  The message  will arrive to $u$ at the end of the current round. 
This is a  widely used  standard model known as the {\em CONGEST model} to study distributed algorithms (e.g., see \cite{peleg,PK09}) and captures the bandwidth constraints inherent in real-world computer  networks. 

We  focus on minimizing the  {\em the running time}, i.e., the number of {\em rounds} of distributed communication. Note that the computation that is performed by the nodes locally is ``free'', i.e., it does not affect the number of rounds; however, we will only perform polynomial cost computation locally (in particular, very simple computations) at any node. 

\subsection{Random Walk Preliminaries}
\label{sec:rwalk}
We consider a {\em simple random walk} in an undirected graph: In each step the walk goes from the current node to a random neighbor i.e., from the current node $v$, the probability of moving to node a $u$ is $\Pr(v,u) = 1/d(v)$ if $(v,u) \in E$, otherwise $\Pr(v,u) = 0$, where $d(v)$ is the degree of $v$.

Suppose a random walk starts at some vertex $v$ in a graph $G$. Let $\pd_0$ be the initial distribution with probability $1$ at the node $v$ and zero at all other nodes. Then we get a probability distribution $\pd_t$ at time $t$ starting from the initial distribution $\pd_0$. Note that we hide the starting node in the notation of the probability distribution $\pd_t$. We hope that it is clear to the reader from the context. (Informally) we say that the distribution $\pd_r$ is stationary (or steady-state) for the graph $G$ when no further changes on the distribution i.e.,  $\pd_{r+t} = \pd_r$ for $t \ge 1$. It is known that the stationary distribution of an undirected connected graph is a well defined quantity which is $\bigl(\frac{d(v_1)}{2m}, \frac{d(v_2)}{2m}, \ldots, \frac{d(v_n)}{2m}\bigr)$, where $d(v_i)$ is the degree of the node $v_i$.  
We denote the stationary distribution vector by $\pi$, i.e., $\pi(v) = d(v)/2m$ for every node $v$. The stationary distribution of a graph is fixed irrespective of the starting node of a random walk, however, the time to reach to the stationary distribution could be different for the different starting nodes. 

The {\em mixing time} of a random walk starting from the source node $v$, denoted by $\tau_v$, is the  time (or number of steps) taken to reach to the stationary distribution of the graph.  The mixing time, denoted by $\tau$, is the maximum mixing time among all (starting) nodes in the graph.
The formal definitions are given below.

\begin{definition}\label{def:mixing-time} ($\tau_v(\eps)$--mixing time for the source $v$ and $\tau(\eps)$--mixing time of the graph)\\
Define $\tau_v (\eps)= \min \{t : ||\pd_t - \pi||_1 \leq \eps\}$, where $||\cdot||_1$ is the $L_1$ norm. Then $\tau_v(\eps)$ is called the $\eps$-near mixing time for any $\eps$ in $(0, 1)$. The mixing time of the graph is denoted by $\tau (\eps)$ and is defined by $\tau(\eps) = \max \{\tau_v(\eps): v \in V\}$. It is clear that $\tau_v (\eps) \leq \tau (\eps)$. 
\end{definition}

We note that  mixing time definition usually takes $\eps$ to be $1/2e$. Our goal is to estimate $\tau_v(\eps)$ for any small $\eps \in (0,1)$, say $\eps = 1/n^2$. We omit $\eps$ from the notation when it is understood from the context. The definition of $\tau_v$ is consistent due to the following standard monotonicity property of the random walk probability distributions.
\begin{lemma}\label{lem:monotonicity}
$||\pd_{t+1} - \pi||_1 \leq  ||\pd_t - \pi||_1$.
\end{lemma}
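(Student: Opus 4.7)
The plan is to reduce the claim to the standard fact that the $L_1$ norm is non-expansive under the action of a row-stochastic matrix, together with the defining property of the stationary distribution. Concretely, I would let $P$ denote the transition matrix of the random walk, with entries $P_{ij} = 1/d(v_i)$ if $(v_i,v_j) \in E$ and $0$ otherwise, so that $P$ is row-stochastic: $\sum_j P_{ij} = d(v_i)/d(v_i) = 1$ for every $i$. Viewing the distributions $\pd_t$ and $\pi$ as row vectors, the random walk dynamics give $\pd_{t+1} = \pd_t P$, while the definition of the stationary distribution of the simple random walk on an undirected graph (with $\pi(v) = d(v)/2m$) gives $\pi = \pi P$ (a one-line check: $(\pi P)_j = \sum_i \frac{d(v_i)}{2m}\cdot\frac{\mathbf{1}[(v_i,v_j)\in E]}{d(v_i)} = d(v_j)/2m = \pi(v_j)$).

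First I would subtract these two identities to obtain the key relation
\[
\pd_{t+1} - \pi = (\pd_t - \pi)\, P,
\]
which expresses the deviation from stationarity at time $t+1$ as the image of the deviation at time $t$ under $P$. Then I would bound the $L_1$ norm of the right-hand side by the triangle inequality:
\[
\|(\pd_t - \pi)P\|_1 = \sum_j \Bigl| \sum_i (\pd_t - \pi)_i\, P_{ij} \Bigr| \leq \sum_j \sum_i |(\pd_t - \pi)_i|\, P_{ij} = \sum_i |(\pd_t - \pi)_i| \sum_j P_{ij} = \|\pd_t - \pi\|_1,
\]
where the last equality uses that each row of $P$ sums to $1$. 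Chaining these two steps yields $\|\pd_{t+1} - \pi\|_1 \leq \|\pd_t - \pi\|_1$, which is exactly the claim.

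I do not expect a serious obstacle here: the only two ingredients are the stationarity identity $\pi = \pi P$ (specific to the simple random walk on an undirected graph, and verified in one line from $\pi(v) = d(v)/2m$) and the non-expansiveness of row-stochastic matrices in the $L_1$ norm (a generic fact proved by a single application of the triangle inequality, as above). The only mild subtlety worth flagging is the convention for whether distributions act as row vectors on the right of $P$ or column vectors on the left; fixing this convention at the start keeps the short calculation clean. No assumptions on the graph beyond those already in force (undirected, connected) are needed for the monotonicity itself.
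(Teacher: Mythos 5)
Your proof is correct and follows essentially the same route as the paper: both rest on the stationarity identity and the $L_1$ non-expansiveness of a stochastic matrix applied to the deviation $\pd_t - \pi$, differing only in the row-vector/right-multiplication convention versus the paper's column-vector formulation with the transposed (column-stochastic) matrix. Your writeup additionally spells out the one-line verifications that the paper leaves implicit, but the argument is the same.
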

\begin{proof}(adapted from Exercise 4.3 in \cite{Levin})
The monotonicity follows from the fact that $||A\pd||_1 \le ||\pd||_1$, where $A$ is the transpose of the transition probability matrix of the graph and $\pd$ is any probability vector. That is, $A(i,j)$ denotes the probability of transitioning from the node $j$ to the node $i$. This in turn follows from the fact that the sum of entries of any column of $A$ is 1.

We know that $\pi$ is the stationary distribution of the transition matrix $A$. This implies that if $\ell$ is $\eps$-near mixing, then $||A^{\ell}\pd_0 - \pi||_1 \leq \eps$, by definition of $\eps$-near mixing time and $\pd_{\ell} = A^{\ell}\pd_0$. Now consider $||A^{\ell+1}\pd_0 - \pi||_1$. This is equal to $||A^{\ell+1}\pd_0 - A\pi||_1$, since $A\pi = \pi$.  However, this reduces to $||A(A^{\ell}\pd_0 - \pmb{\pi})||_1 \leq ||A^{\ell}\pd_0 - \pmb{\pi}||_1 \leq \eps$, (from the fact  $||A\pd||_1 \le ||\pd||_1$). Hence, it follows that $(\ell+1)$ is also $\eps$-near mixing time.
\end{proof} 
\noindent {\bf Mixing time computation problem.}
Given an undirected, connected and non-bipartite graph $G$, the goal is to design an efficient distributed algorithm to compute the mixing time of the graph. 

\subsection{Related Work}
\label{sec:related}
Das Sarma et al. \cite{drw-jacm} presented a fast decentralized algorithm for estimating mixing time, conductance and spectral gap of the network. In
particular, they show that given a starting node $s$, the mixing time with respect to $s$, i.e, $\tau_s$, can be
estimated in $\tilde{O}(n^{1/2} + n^{1/4}\sqrt{D\tau_s})$ rounds. This gives an alternative algorithm to the only previously known
approach by Kempe and McSherry \cite{kempe} that can be used to estimate
$\tau_s$ in $\tilde{O}(\tau_s)$ rounds. In fact, the work of \cite{kempe} does more and gives a decentralized algorithm for
computing the top $k$ eigenvectors of a weighted adjacency matrix
that runs in $O(\tau\log^2 n)$ rounds if two adjacent nodes are allowed to exchange $O(k^3)$ messages per round, where $\tau$ is
the mixing time and $n$ is the size of the network.  

The algorithm of Das Sarma et al. 
\cite{drw-jacm} is based on sampling nodes by performing sub-linear time random walks of certain length and comparing the distribution with the stationary distribution. Their algorithm can be faster than our approach in certain cases, but slower
in some cases than ours. In particular, if $\tau$ is smaller than $\max \{\sqrt{n}, n^{1/4} \sqrt{D}\}$, then
our algorithm is faster.  Also there is a grey area for the accuracy parameter $\eps$ for which their algorithm cannot estimate the mixing time. They use a testing result from Batu et al. \cite{BFFKRW} to determine if the two distributions are close enough. This test may fail if the difference between the two distributions falls in a certain interval. More precisely, the algorithm
of Das Sarma et al. estimates the mixing time for accuracy parameter $\eps = 1/(2e)$ with respect to a source node $v$, $\tau_v(1/2e)$  as follows: the estimated value will be between the true value  and $\tau_v(O(1/(\sqrt{n}\log n)))$. In contrast, our algorithm estimates
$\tau_v(\eps)$, for even small values of $\eps$, say $\eps = 1/n^2$: the estimated value will be
between the true value and $\tau_v(1/n^2)$. 

Random walks have been used in a variety of distributed network applications, we refer to \cite{storage-spaa13,APR-podc13,SarmaMP15,DasSarmaMPU15,drw-jacm,kempe,KM15,sirocco14} and the references therein for more details.

\section{Algorithm for Mixing Time}\label{sec:estimate-mixing}
We present an algorithm to compute the mixing time of a graph $G$ from a specified source node. In other words,  we present an algorithm which finds a length $\ell$ such that the probability distribution of a random walk of length $\ell$ reaches close to the stationary distribution. The main idea of our algorithm is to perform many random walks from the source node of some length $\ell$  in parallel. After $\ell$ steps, every node $u$ estimates the probability distribution $\pd_{\ell}(u)$ as the fraction of random walks that terminate at $u$ over all the walks. Then we compare the estimated distribution of $\pd_{\ell}$ with the stationary distribution $\pi$ to determine if they are sufficiently close; otherwise, we double the length $\ell$ and retry. Once we find the correct consecutive lower and upper bound of the length, a binary search will determine the mixing length (up to the allowed accuracy parameter $\epsilon$). The monotonicity property (cf. Lemma~\ref{lem:monotonicity}) admits the binary search.

Our algorithm starts with $\ell=1$ and runs $K= 80 n^8 \log n$ random walks of length $\ell$ from a source node, say node $s$. When the difference (i.e., the $L_1$-norm difference) between the estimated $\ell$-length walk distribution of $\pd_{\ell}$ with the stationary distribution $\pi$ is greater than $\eps$, $\ell$ is doubled and retried. This process is repeated to identify the largest $\ell$ such that the difference is greater than $\eps$ and the smallest $\ell$ such that the difference is less than $\eps$. These give lower and upper bounds on the required $\tau_s$ respectively. In fact, the upper bound (i.e., the smallest $\ell$ such that the difference is less than $\eps$) is at most twice as $\tau_s$, since we are doubling the length each time. Then a binary search between lower and upper bounds will determine the exact mixing time $\tau_s$. Note that the monotonicity property (cf. Lemma~\ref{lem:monotonicity}) guarantees that once the difference becomes less than $\eps$, then it would be always less for any larger length. We show in the analysis that the estimated probability of $\pd_{\ell}(u)$ is close to the actual probability for every node $u$ in each step of the walk (cf. Lemma \ref{lem:play-with-chernoff}).  The pseudocode is given in Algorithm~\ref{alg:mixing-time}.

We show (in the next section) that the above algorithm estimates mixing time accurately with high probability\footnote{With high probability means with probability at least $1 - \frac{1}{n}$.}. The main technical challenge in implementing the above method in CONGEST model is that performing many walks from a source node in parallel can create a lot of congestion. Our algorithm uses a crucial property of random walks to overcome the congestion (see e.g., \cite{DasSarmaMPU15,drw-jacm}). In particular, we show that there will be no congestion in the network even if we perform up to a polynomial (in $n$) number of random walks from the source node in parallel (cf. Lemma \ref{lem:congestion}). The basic idea is to send the {\em count} of the number of random walks that pass through an edge. As random walks are memoryless processes, it is sufficient to send the number of walks traversing an edge in a given round to estimate $\pd_{\ell}$. Since this number is polynomially bounded, $O(\log n)$ bits suffice. Therefore, it is easy to see that performing $\ell$-length of random walks finishes in $O(\ell)$ rounds in {\em CONGEST} model. We show that our algorithm computes mixing time accurately in $O(\tau_s\log n)$ rounds with high probability (cf. Theorem \ref{thm:mix-time}).   


\subsection{Reducing the number of random bits}
\label{sec:rrb}
To perform $K = 80 n^8 \log n$ random walks can require at least so many random bits per node (as well as that much time), if done in a straightforward manner. We can  make the algorithm more lightweight by doing only $O(d(u) \log n)$ work 
and only so many random coin tosses per node. 
Instead of doing coin-flips for each of the tokens separately, we do the following to reduce the number of random coin-flips overhead  at each node. Nodes which have less than $(degree\times \log n)$ tokens, will select an edge randomly for each token i.e., they will do the coin-flips for each token.  At any round, if a node $u$ has tokens $T^u$ such that $T^u\geq d(u)\log n$, then $u$ forwards the average $T^u/d(u)$ tokens to all of its neighbors (there are $d(u)$ many). We show that this simple averaging approach has the bias of at most $O(\log n)$ with high probability. In fact, if we perform coin-flips for each token i.e., select a neighbor randomly for each token then, in expectation, $T^u/d(u)$ tokens will choose to go to one neighbor, where $T^u$ is the total number of tokens at $u$. Then by a Chernoff bound, it is easy to show that the deviation from the mean  is at most  $O(\log n)$  with high probability. That is, the bias is $O(\log n)$ tokens per edge with high probability.  Therefore, the total bias at a node $u$ is $O(d(u)\log n)$ in a single round ($O(\log n)$ bias coming from $d(u)$ neighbors). Hence, total bias at node $u$ until the algorithm stops is $O(\tau_s d(u) \log n)$, since the algorithm runs for $O(\tau_s)$ rounds. This quantity is at most $n^4 \log n$, since $d(u) < n$ and $\tau_s \leq O(n^3)$ in worst case. Therefore,  the deterministic averaging method has the additional approximation error at most $(n^4 \log n)/K = 1/n^4$, which is negligible compared to the considered error $\eps = 1/n^2$. 

\begin{algorithm}[H]
\caption{\sc EstimateMixingTime}
\label{alg:mixing-time}
\textbf{Input:} A graph $G = (V, E)$ and a source node $s$. \\
\textbf{Output:} $\tau_s (1/n^2)$ (mixing time starting from the node $s$).

\begin{algorithmic}[1]
\STATE Node $s$ creates a BFS tree via flooding (rooted at $s$) so that each node knows their parent in the BFS tree.  
\STATE Node $s$ broadcasts the value $K = 80 n^8 \log n$ to all the other nodes via broadcasting over the tree.  
\FOR{$h =0, 1, 2, \dots$}
\STATE \label{stp:creat-token} $\ell \leftarrow 2^{h}$   
\STATE \label{stp:rw-prob-computation} Node $s$ creates $K = 80 n^8 \log n$ random walk tokens.
\FOR{round $i = 1, 2, \ldots, \ell$}
\STATE Each node $v$ holding at least one token, does the following in parallel: 
\STATE For every neighbor $w$, set $T^v_w = 0$ \hspace{0.1in} // [$T^v_w$ indicates the number of tokens chosen to move to $w$ from $v$ at round $i$] 
\STATE For all the tokens calculate the number $T^v_w$ for each neighbor $w$ (as described in Section \ref{sec:rrb}).
\STATE Send the counter number $T^v_w$ to the neighbor $w$.  
\ENDFOR  
\STATE \label{stp:end-rw-prob} Each node $w$ counts the total number of tokens it holds. Say, $\zeta_w = \sum_{v \in N(w)} T^v_w$. 
\STATE Each node $w$ sends the absolute difference value $\partial_w = |\frac{\zeta_w}{K} - \frac{d(w)}{2m}|$ to node $s$ through the BFS tree. Each node sums up the $\partial_w$ values of their children and forwards the sum to its parent. \label{stp:bfs-upcast}
\STATE Node $s$ locally checks:
\IF{ $(\sum_{w \in V} \partial_w \leq 1/n^2)$} \label{stp:checking}
\STATE BREAK;
\ELSE
\STATE Continue from Step~\ref{stp:creat-token} for the next (doubling) value of $\ell$.
\ENDIF
\ENDFOR
\STATE Let's say $\ell = 2^l$ for some integer $l$. 
\STATE Perform binary search from $2^{l-1}$ to $2^l$ to find an integer  $\ell$ such that:
\STATE  For length $\ell$, $\sum_{w \in V} \partial_w \leq 1/n^2$, and for  length $\ell -1$, $\sum_{w \in V} \partial_w > 1/n^2$,  by computing from Step~\ref{stp:rw-prob-computation} to Step~\ref{stp:checking}.
\STATE Output $\ell$.
\end{algorithmic}
\end{algorithm}

\subsection{Analysis of the Algorithm}\label{sec:analysis}

We first show a result which will imply that our algorithm correctly estimates the mixing time for a given source node. Let $\pd_{\ell}(u)$ be the probability that a random walk of length $\ell$ terminates over node $u$, starting from the given source node. We show that the above process of performing many random walks and then computing the fraction of walks that stop at $u$ can approximate $\pd_{\ell}(u)$ with high accuracy for every node $u$. For this, we can ignore the probability $\pd_{\ell}(u)$ whose value is less than $1/n^4$. Because in the worst case, there may be at most $n$ such vertices and the sum of these ignored probabilities is less than $1/n^3$, which is negligible compared to $1/n^2$, the estimation error. (Here we assume that the mixing time estimation error $\eps$ is  $1/n^2$ --- see Definition~\ref{def:mixing-time}). However, one can achieve much more accuracy by performing a larger number of random walks. In the following lemma, we show that if we perform $K = 80 n^8 \log n$ random walks from a source node $s$, then the above algorithm can estimate $\pd_{\ell}(u)$ at any node $u$ for which $\pd_{\ell}(u) \geq 1/n^4$.  

\begin{lemma}\label{lem:play-with-chernoff}
If the probability of an event $X$ occurring is $p$ such that $p \geq 1/n^4$, then in $t = 80 n^8 \log n$ trials , the fraction of times the event $X$ occurs is $p \pm \frac{1}{n^6}$ with high probability.  
\end{lemma}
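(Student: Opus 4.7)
The plan is to model the $t$ trials as independent Bernoulli$(p)$ random variables $Y_1,\ldots,Y_t$, let $Y=\sum_{i=1}^{t} Y_i$ count the occurrences of $X$, and reduce the claim to a standard concentration inequality on $Y$ around its mean $\mu := \E[Y] = tp$. The desired event ``the empirical fraction $Y/t$ lies in $p\pm 1/n^6$'' is exactly $|Y-\mu|\le t/n^6$, so everything collapses to upper bounding $\Pr[\,|Y-\mu|>t/n^6\,]$.

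The key trick is to rewrite the target additive deviation $t/n^6$ as a multiplicative deviation $\delta\mu$ with the choice $\delta := 1/(p n^6)$. The hypothesis $p\ge 1/n^4$ then gives $\delta\le 1/n^2 < 1$, placing us squarely in the regime where the standard two-sided multiplicative Chernoff bound
\[
\Pr\!\bigl[\,|Y-\mu|\ge \delta\mu\,\bigr]\ \le\ 2\exp\!\bigl(-\delta^2\mu/3\bigr)
\]
is applicable. After substitution the exponent cleans up to $\delta^2\mu/3 = t/(3 p n^{12})$, and plugging in $t = 80 n^8 \log n$ should give $2\exp(-\Omega(\log n))$, which by the footnote's definition ($1-1/n$) counts as ``with high probability.'' The constant $80$ in $K=80 n^8 \log n$ is engineered precisely so the inequality is saturated when $p$ is at the boundary $1/n^4$ of the hypothesis.

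I do not expect a conceptual obstacle here; the entire argument is a two-line invocation of Chernoff once the correct multiplicative deviation $\delta$ is chosen. The step to babysit is the bookkeeping across the range $p\in[1/n^4,1]$: the exponent $t/(3 p n^{12})$ is largest at the lower end of this range (where $\delta$ is as large as $1/n^2$), and the hypothesis $p\ge 1/n^4$ is exactly what ensures both that we remain in the $\delta<1$ regime of the multiplicative Chernoff bound and that the exponent is a sufficiently large multiple of $\log n$. Once those routine constants are verified, no further work is needed.
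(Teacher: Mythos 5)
Your reduction to a two\-/sided multiplicative Chernoff bound with $\delta = 1/(pn^6)$ is the natural way to attack the statement as written, but the final ``bookkeeping'' step fails, and you have the worst case backwards. The exponent you obtain is $\delta^2\mu/3 = t/(3pn^{12})$, which is \emph{decreasing} in $p$, so the hypothesis $p \ge 1/n^4$ only controls the \emph{best} case. At the lower end $p = 1/n^4$ the exponent is $t/(3n^8) = \tfrac{80}{3}\log n$ and everything works; but for any $p \gg (\log n)/n^4$ --- e.g.\ $p = 1/n^2$ or $p$ a constant --- the exponent is $o(1)$ and the bound $2\exp(-\delta^2\mu/3)$ is vacuous. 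This is not a matter of tuning the constant $80$: for such $p$ the empirical fraction has standard deviation $\sqrt{p(1-p)/t} \approx \sqrt{p}\,/(n^4\sqrt{\log n}) \gg 1/n^6$, so by anticoncentration the event $|Y/t - p| \le 1/n^6$ actually fails with probability close to $1$. Your (honest) choice of $\delta$ thus exposes that the additive $\pm 1/n^6$ claim cannot hold uniformly over $p \in [1/n^4, 1]$; the lemma as stated is false in that regime.

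The paper's proof takes a different route: it fixes $\delta = 1/n^2$ independently of $p$, which yields the valid relative\-/error guarantee $Y/t \in p(1 \pm 1/n^2)$ with probability $1 - n^{-20}$, and then converts to the additive form by invoking ``$\delta p \ge 1/n^6$'' --- an inequality pointing the wrong way for that conversion, so the paper's derivation of $\pm 1/n^6$ is likewise flawed for large $p$. What the downstream argument (Lemma~\ref{lem:correctness} and Theorem~\ref{thm:mix-time}) actually needs is only an additive per\-/node error of order $1/n^3$, so that the $L_1$ error summed over $n$ nodes stays below $1/n^2$; with additive deviation $a = 1/n^3$ your own computation gives exponent $a^2 t/(3p) \ge 80 n^2\log n/3$ uniformly over $p \le 1$, which does go through. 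So the correct repair is to weaken the advertised accuracy to $\pm 1/n^3$ (or to state the relative guarantee $p(1\pm 1/n^2)$), not to push the $\delta = 1/(pn^6)$ computation harder.
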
    
\begin{proof}
The proof follows from the standard Chernoff bound, $$ \Pr \left[\frac{1}{t} \sum_{i=1}^t X_i < (1 - \delta)p \right] < \left(\frac{e^{-\delta}}{(1-\delta)^{(1-\delta)}} \right)^{tp} < e^{-tp\delta^2/2}$$ and 
$$\Pr \left[\frac{1}{t} \sum_{i=1}^t X_i > (1 + \delta)p \right] < \left(\frac{e^{\delta}}{(1+ \delta)^{(1+ \delta)}} \right)^{tp},$$ where $X_1, X_2, \ldots, X_t$ are $t$ independent identically distributed $0-1$ random variables such that $\Pr[X_i = 1] = p$ and $\Pr[X_i = 0] = (1-p)$. The right hand side of the upper tail bound further reduces to $2^{-\delta t p}$ for $\delta > 2e -1$ and for $\delta <2e - 1$, it reduces to $e^{-tp\delta^2/4}$. 

Let $\delta =  \frac{1}{n^2}$. Then $\delta < 1$. So we consider the weaker bound of both the lower and upper tail bounds which is $e^{-tp\delta^2/4}$. Therefore, by choosing $t = 80 n^8\log n$, we get  $e^{-tp\delta^2/4} \leq e^{- \frac{t}{4n^8}} = e^{- 20\log n} = 1/n^{20}$. This implies that $\frac{1}{t} \sum_{i=1}^t X_i < p - 1/n^6$ and $\frac{1}{t} \sum_{i=1}^t X_i > p + 1/n^6$ with probability at most $1/n^{20}$, since $\delta p \geq 1/n^6$. Therefore, $X = \frac{1}{t} \sum_{i=1}^t X_i \in [p - 1/n^6,\, p + 1/n^6]$ with high probability.
\end{proof}


\begin{lemma}\label{lem:correctness}
The algorithm {\sc EstimateMixingTime} approximates the probability $\pd_{\ell}(u)$ for every length $\ell$ and for all node $u$ (such that $\pd_{\ell}(u) \geq 1/n^4$) with high probability. 
\end{lemma}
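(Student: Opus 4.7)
The plan is to reduce the lemma to a pointwise Chernoff statement (Lemma~\ref{lem:play-with-chernoff}) via a union bound, then account for the small deterministic-averaging bias introduced in Section~\ref{sec:rrb}. I would first consider the idealized version of the algorithm in which every token, at every round, independently selects a uniformly random neighbor (i.e., no averaging shortcut is ever used). In this idealized setting, the $K = 80 n^8 \log n$ tokens behave as $K$ independent random walks of length $\ell$. Hence, for a fixed node $u$ and a fixed length $\ell$, the indicator that walk $i$ sits on $u$ at step $\ell$ is a Bernoulli random variable with mean exactly $\pd_\ell(u)$, and the random variables corresponding to different walks are mutually independent.

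Next I would apply Lemma~\ref{lem:play-with-chernoff} with $p = \pd_\ell(u)$ (using the hypothesis $p \ge 1/n^4$) and $t = K$, obtaining that the empirical fraction $\zeta_u / K$ lies in $[\pd_\ell(u) - 1/n^6,\, \pd_\ell(u) + 1/n^6]$ with probability at least $1 - 1/n^{20}$. A union bound over all $n$ nodes and all lengths $\ell$ that the algorithm considers (at most $O(\tau_s \log n) = O(n^3 \log n)$ values, counting the doubling phase and the binary search) loses only a factor of $O(n^4 \log n)$, still leaving a failure probability of at most $O(\log n)/n^{16}$. So in the idealized algorithm, $|\zeta_u/K - \pd_\ell(u)| \le 1/n^6$ simultaneously for every length $\ell$ considered and every qualifying node $u$, with high probability.

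The remaining step is to account for the fact that the actual algorithm replaces independent coin flips by deterministic averaging whenever a node holds at least $d(v) \log n$ tokens. I would invoke the bias bound already established at the end of Section~\ref{sec:rrb}: the total discrepancy introduced at node $u$ over all $O(\tau_s)$ rounds is at most $O(\tau_s d(u) \log n) \le n^4 \log n$ tokens, so the induced error in the empirical fraction $\zeta_u/K$ is at most $n^4 \log n / K = 1/n^4$. Combining the two error sources gives
\[
  \left| \frac{\zeta_u}{K} - \pd_\ell(u) \right| \le \frac{1}{n^6} + \frac{1}{n^4} = O\!\left(\frac{1}{n^4}\right),
\]
which is well below the accuracy $\eps = 1/n^2$ used in the stopping test. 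Since this holds simultaneously for every $(\ell, u)$ pair of interest, the lemma follows.

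The step that requires the most care is coupling the idealized analysis with the averaging shortcut, because the averaged tokens are no longer independent Bernoullis and so the Chernoff bound cannot be applied directly to them. My plan is to treat the two error sources separately — Chernoff for the random coin-flip component, the deterministic bias bound from Section~\ref{sec:rrb} for the averaged component — and then invoke the triangle inequality, as sketched above. The additional quantitative slack in the choice $K = 80 n^8 \log n$ (far more walks than needed to drive Chernoff error alone down to $1/n^6$) is precisely what makes this combination go through cleanly.
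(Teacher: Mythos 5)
Your proposal is correct and follows essentially the same route as the paper's own proof: reduce the statement to the pointwise Chernoff estimate of Lemma~\ref{lem:play-with-chernoff} with $p=\pd_\ell(u)\ge 1/n^4$ and $t=K$, then take a union bound. In fact your version is somewhat more complete than the paper's, since you make explicit both the union bound over the $O(\log \tau_s)$ lengths considered and the additional $n^4\log n/K$ bias from the deterministic averaging of Section~\ref{sec:rrb}, which the paper's proof of this lemma leaves implicit (and which only degrades the per-node error from $1/n^6$ to $O(1/n^4)$, still negligible against $\eps=1/n^2$).
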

\begin{proof}
Let's first consider a particular length $\ell$. Suppose the algorithm {\sc EstimateMixingTime} outputs the estimated probability $\tilde{\pd}_{\ell}(u)$ for each node $u$ for the length $\ell$. It is follows from the Lemma \ref{lem:play-with-chernoff} that $\tilde{\pd}_{\ell}(u) = \pd_{\ell}(u) \pm 1/n^6$ with high probability for any node $u$ for which $\pd_{\ell}(u) \geq 1/n^4$ (by taking union bound). Therefore,  $\mid \tilde{\pd}_{\ell}(u) - \pd_{\ell}(u) \mid \leq 1/n^6$ for any node $u$ (such that $\pd_{\ell}(u) \geq 1/n^4$). Since $\ell$ can be arbutrary, the lemma holds for every length $\ell$.   
\end{proof}

Before proceeding to the main result (cf. Theorem \ref{thm:mix-time}) of this section, we prove a crucial lemma on the congestion of our algorithm. The lemma below guarantees that there will be no congestion even if we perform a polynomial number of random walks in parallel in the network.   

\begin{lemma}\label{lem:congestion}
If every node perform at most a polynomial (in $n$) number of random walks in parallel then there will be no congestion in the network. 
\end{lemma}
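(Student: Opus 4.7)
The plan is to exploit the memorylessness of the simple random walk: rather than routing each token as a separate message (which would obviously overload edges when $K = 80n^8 \log n$ tokens flow out of a single node), each node at each round aggregates its tokens by destination and sends across every outgoing edge $(v,w)$ a single integer, namely the count $T^v_w$ of tokens that chose to step from $v$ to $w$ in this round. Because a walk step depends only on the current vertex and not on the token's history or identity, the receiver can treat the incoming count as ``$T^v_w$ fresh tokens now sitting at $w$,'' and subsequent steps proceed correctly. This is precisely what Steps~7--12 of Algorithm~\ref{alg:mixing-time} already describe; the lemma amounts to verifying that a single such count fits in one CONGEST message.

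The key steps I would carry out are: (i) Observe that tokens are exchangeable under the walk dynamics, so the multiset of tokens held at a node $v$ is fully described by the per-neighbor counts $\{T^v_w\}_{w \in N(v)}$, and the future behavior of the collection depends only on these counts. (ii) Bound the total token mass. Since tokens are created only once (at the source, in Step~\ref{stp:creat-token}) and are never duplicated or destroyed during the walk phase, the total number of tokens in the network at every round is exactly $K$, and hence the number residing at any node $v$ is at most $K$. (iii) Conclude that for every edge $(v,w)$ and every round, $0 \leq T^v_w \leq K = 80n^8 \log n$, so $T^v_w$ is encodable in $\lceil \log_2(K+1) \rceil = O(\log n)$ bits. (iv) Each edge therefore carries at most one $O(\log n)$-bit integer per direction per round, which is exactly the bandwidth allowed by the CONGEST model, so no edge is ever congested. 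The same argument goes through for any $\text{poly}(n)$ bound on the number of walks, since $\log(n^c) = O(\log n)$.

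The main subtlety, rather than a true obstacle, is to make sure the aggregation does not silently change the distribution we are estimating. Two things must be checked. First, indistinguishability of tokens: since all $K$ walks originate at the same source $s$ and take independent steps, their positions after $\ell$ rounds are i.i.d., so collapsing them to a count per edge loses no information relevant to $\pd_\ell$. Second, the deterministic averaging heuristic of Section~\ref{sec:rrb}, which replaces per-token coin flips by sending $\lfloor T^v / d(v) \rfloor$ tokens along each edge when $T^v \geq d(v)\log n$, only reduces randomness while still communicating counts bounded by $K$; its approximation error was already analyzed in Section~\ref{sec:rrb} and does not affect the congestion bound. The BFS-tree aggregation in Step~\ref{stp:bfs-upcast} is likewise a single $O(\log n)$-bit value per node, so it too fits within CONGEST. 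Putting these pieces together yields the lemma.
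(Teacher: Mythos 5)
Your proposal is correct and follows essentially the same route as the paper's own proof: exploit the memorylessness of the walk to replace per-token messages by a single per-edge count $T^v_w$, and observe that a count bounded by $K=\mathrm{poly}(n)$ fits in one $O(\log n)$-bit CONGEST message. Your write-up is somewhat more explicit than the paper's (you spell out token conservation and the exact bit bound), but there is no substantive difference in the argument.
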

\begin{proof}
It follows from our algorithm that each node only needs to know the number of random walks that stop over itself after $\ell$ rounds. Therefore nodes do not require to know from which source node or rather from where it receives the random walk tokens. Hence it is not needed to send the ID of the source node with the tokens. Recall that in our algorithm,  in each round, every node currently holding at least one random walk token (could be many) does the following.  For each token, an edge (i.e., a neighbor) is chosen uniformly at random to send the token. A particular edge may be chosen for multiple tokens. Instead of sending each token separately through that edge, the algorithm simply sends the count, i.e., number of tokens chosen to pass over the edge. Random walk is a Markovian process which means the process is memoryless. Therefore, it is sufficient to send the count of walks traversing an edge, without the need to append any additional information about the walks. Since we consider {\em CONGEST} model, a polynomial in $n$ number of token's count can be sent in a single piece of message of size $O(\log n)$ without any congestion over edges.  
\end{proof}

\begin{theorem}\label{thm:mix-time}
Given a graph $G$ and a source node $s$, the algorithm {\sc EstimateMixingTime} takes $O(\tau_s \log n)$ rounds and finds the mixing time $\tau_s(1/n^2)$ with high probability. 
\end{theorem}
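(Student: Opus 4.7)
The plan is to split the argument into two parts: correctness (the algorithm's output equals $\tau_s(1/n^2)$) and running time ($O(\tau_s \log n)$ rounds), leveraging the three lemmas already established.

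For correctness, I would first invoke Lemma~\ref{lem:correctness}: at every tested length $\ell$ and every node $u$ with $\pd_\ell(u) \ge 1/n^4$, the estimate $\zeta_u/K$ lies within $\pm 1/n^6$ of $\pd_\ell(u)$ with high probability, so their joint contribution to $|\sum_u \partial_u - ||\pd_\ell-\pi||_1|$ is at most $n/n^6 = 1/n^5$. The at most $n$ nodes with $\pd_\ell(u) < 1/n^4$ carry total true mass at most $1/n^3$, and by the Chernoff argument of Lemma~\ref{lem:play-with-chernoff} their total estimated mass is $O(\log n /n^3)$ with high probability. Combined with the $O(1/n^4)$ deterministic-averaging bias from Section~\ref{sec:rrb}, the test quantity $\sum_u \partial_u$ computed in Step~\ref{stp:bfs-upcast} approximates $||\pd_\ell - \pi||_1$ up to an additive $o(1/n^2)$. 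Hence the threshold check in Step~\ref{stp:checking} correctly decides whether $||\pd_\ell - \pi||_1 \le 1/n^2$ at every $\ell$ outside an $o(1/n^2)$ boundary zone, and a union bound over the $O(\log n)$ lengths ever tested keeps the overall failure probability polynomially small.

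Combined with Lemma~\ref{lem:monotonicity}, $||\pd_\ell - \pi||_1$ is non-increasing in $\ell$, so there is a unique transition index $\ell^* = \tau_s(1/n^2)$. The doubling phase therefore halts at the first power of two $2^l$ at which the test passes, yielding $2^{l-1} < \tau_s \le 2^l$, and the binary search restricted to $[2^{l-1}, 2^l]$ then converges to $\ell^*$. For the round complexity, the doubling phase runs walks of length $1, 2, 4, \dots, 2^l$ for a cumulative cost of $\sum_{h=0}^{l} 2^h = O(2^l) = O(\tau_s)$ rounds; each iteration also pays an $O(D) = O(\tau_s)$ BFS upcast of $\sum_u \partial_u$, which is absorbed. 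The binary search then performs $O(\log \tau_s) = O(\log n)$ additional trials (using the worst-case bound $\tau_s = O(n^3)$), each a walk of length at most $2^l = O(\tau_s)$, contributing $O(\tau_s \log n)$ rounds overall. Congestion is handled by Lemma~\ref{lem:congestion}: in every round each edge transmits a single integer $T^v_w \le K = O(n^8 \log n)$ that fits in $O(\log n)$ bits, so each step of the parallel walks costs one CONGEST round.

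The hardest step I anticipate is the error-budget bookkeeping at the transition: one must verify that the per-node Chernoff error, the truncation of low-probability nodes, and the averaging bias jointly remain strictly below the gap $||\pd_{\tau_s - 1} - \pi||_1 - 1/n^2$ (which is positive by the definition of $\tau_s = \tau_s(1/n^2)$ as a minimum). Only then is the binary search guaranteed to return exactly $\ell^*$ rather than an off-by-one neighbor. Once this bookkeeping is clean, the round-count accounting via the geometric sum and Lemma~\ref{lem:congestion} together finish the proof routinely.
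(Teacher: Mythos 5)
Your proposal follows essentially the same route as the paper's proof: correctness via Lemma~\ref{lem:correctness} with the $1/n^5 + 1/n^3$ error budget and Lemma~\ref{lem:monotonicity} to justify the doubling-plus-binary-search structure, and the round count via the no-congestion guarantee of Lemma~\ref{lem:congestion}, the $O(D)$ BFS upcast per trial, and $O(\log \tau_s)$ trials of length $O(\tau_s)$ each. If anything, you are slightly more careful than the paper at the transition point (the ``boundary zone'' where the estimation error could flip the threshold test), an issue the paper's proof does not explicitly address.
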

\begin{proof}
\noindent {\bf Correctness.}
The Lemma \ref{lem:correctness} says that when $\ell$ reaches the mixing time of the graph, the estimated probability $\tilde{\pd}_{\ell}(u)$ of $\pd_{\ell}(u)$ will be close to the stationary distribution $d(u)/2m$ for each node $u$. Hence, the $L_1$-norm difference is $\sum_{u\in V}\mid \tilde{\pd}_{\ell}(u) - d(u)/2m \mid \leq 1/n^5 + 1/n^3 < 1/n^2$. The term $1/n^3$ is coming from the fact that we neglected those probabilities such that $\pd_{\ell}(u) < 1/n^4$; so summing up all of them could be at most $1/n^3$. Further, it follows from the monotonicity property (cf. Lemma~\ref{lem:monotonicity}) that once the norm difference becomes less than $1/n^2$, it would always be less than $1/n^2$ for any larger length of the random walk. Hence, our algorithm terminates when
 the mixing time is correctly estimated (cf. Definition~\ref{def:mixing-time}). 

\noindent {\bf Running time.}
To estimate the mixing time, we try out
increasing values of $\ell$ that are powers of $2$.  Once we find the
right consecutive powers of $2$, the monotonicity property admits a
binary search to determine the exact value. Since each node $u$ knows its own stationary probability (determined just by its
degree), they can compute the difference $\partial_u = |\frac{\zeta_u}{K} - \frac{d(u)}{2m}|$ locally and send it to the source node $s$. Recall that $\zeta_u$ is the number of walks that stop at $u$ after $\ell$ steps and $K$ is the total number of random walks started initially. The node $s$ eventually collects the sum of all $\partial_u$s and checks if $\sum_{u \in V} \partial_u$ is less than $1/n^2$. Performing $K$ random walks of length $\ell$ can be done in $O(\ell)$ rounds, as there is no congestion (cf. Lemma \ref{lem:congestion}). Further, sending the sum of all $\partial_u$s to the source node $s$, can be done in $O(D)$ rounds ($D$ is the diameter). Since this can be done by an upcast through the BFS tree rooted at $s$ (see Step~\ref{stp:bfs-upcast} of Algorithm~\ref{alg:mixing-time}). Hence, for a particular $\ell$, it requires $O(\ell + D)$ rounds. Therefore, total time required to compute the mixing time for a source node $s$ is $O(\sum_{i = 1}^{\log \tau_s} (\ell + D))$ rounds which is $O((\tau_s + D)\log n)$ rounds as $\ell$ is at most $\tau_s$ which is polynomially bounded. Notice that when $\ell \geq \tau_s(1/n^2)$, then the $L_1$ difference between estimated distribution and stationary distribution becomes less than $1/n^2$ (correctness follows from the Lemma \ref{lem:correctness}). This gives the estimated mixing length at most twice as $\tau_s (1/n^2)$, since we are doubling the length each time. Then a binary search will determine the exact mixing time $\tau_s (1/n^2)$. The binary search takes at most $O((\tau_s + D)\log n)$ rounds. Therefore, the algorithm finishes in $O(\tau_s \log n)$ rounds, since $D \leq \Omega(\tau_s)$ for any graph.
\end{proof}

\section{Conclusion} 
We presented a random-walk based distributed algorithm with provable guarantees to compute the mixing time  of undirected graphs. Our algorithm is simple and lightweight, and estimates the mixing time with high accuracy for all ranges of mixing time.
Since mixing time is an important parameter with close relation to spectral properties of the network, our algorithm can be a basic building block in developing topologically-aware networks that measure and monitor their network
properties.

\bibliographystyle{abbrv}
\bibliography{Distributed-RW-arxiv.bib}

\end{document}